\theoremstyle{definition}
\newtheorem{theorem}{Theorem}[section]
\newtheorem{corollary}[theorem]{Corollary}
\newtheorem{proposition}[theorem]{Proposition}
\newtheorem{definition}[theorem]{Definition}
\newcommand{\lang}{{\mathcal{L}}}
\newcommand{\FinDFA}{{\textsf{FinDFA}}}
\newcommand{\CofinDFA}{{\textsf{CofinDFA}}}
\newcommand{\DFA}{{\textsf{DFA}}}
\newcommand{\NFA}{{\textsf{NFA}}}
\newcommand{\CFG}{{\textsf{CFG}}}
\newcommand{\NP}{{\textsf{NP}}}
\title{Quick Brown Fox in Formal Languages}
\author{Kazuhiro Inaba (\texttt{kiki@kmonos.net})}
\begin{document}

\maketitle
\begin{abstract}
Given a finite alphabet $\Sigma$ and a deterministic finite automaton on $\Sigma$,
the problem of determining whether the language recognized by the automaton contains
any {\em pangram} is \NP-complete.
Various other language classes and problems around pangrams are analyzed.
\end{abstract}


\section{Introduction}

\begin{definition}
A string $w \in \Sigma^*$ is a {\em pangram} ({\em perfect pangram}, respectively)
on finite alphabet $\Sigma$ if all $\sigma \in \Sigma$ appear at least once (exactly once) in $w$.
We denote the set of all pangrams on $\Sigma$ by $P_\Sigma$ and the set of
all perfect pangrams by $E_\Sigma$.
\end{definition}

The famous English pangram ``The quick brown fox jumps over the lazy dog''
is in $P_\Sigma$ for $\Sigma = \{\mathtt{a}, \ldots, \mathtt{z}\}$,
while it is not in $E_\Sigma$ because some of the letters appear more than once.
In Japanese, there is an old poem called Iroha-Uta\footnote{\url{https://en.wikipedia.org/wiki/Iroha}}
being a perfect pangram over 47 letters of Japanese alphabet.
In the programming language Ruby, a perfect pangram on ASCII printable characters
can be a valid program that outputs each ASCII printable character exactly
once.\footnote{\url{https://github.com/tric/trick2013/tree/master/kinaba}}

No one will doubt that coming up with the artistic sentence of the ``quick brown fox''
 or Iroha-Uta is a tough brain teaser.
The purpose of this paper is to confirm the natural intuition;
creating a pangram is indeed hard. We discuss on the computational
complexity of the problem deciding whether there is a (perfect) pangram in a given language.


\section{Problem Definitions}
We assume the reader to be familiar with basic concepts of formal languages
and computational complexity, e.g., by~\cite{HandbookFL,NPCBook}.

\begin{definition}
A {\em language acceptor class} $X$ consists of a sets $X_\Sigma$ of {\em language acceptors}
for each finite alphabet $\Sigma$ and a membership relation
$\mathit{mem}_X \in \bigcup_\Sigma X_\Sigma \times \Sigma^*$.
\end{definition}

Examples of language acceptor class are \DFA{} (deterministic finite automata),
\NFA{} (deterministic finite automata), \CFG{} (context free grammars), etc.
Each member of the set $\mathsf{DFA}_{\{0,1\}}$ is a deterministic finite automaton
over the binary alphabet.
We assume that each member $x \in X_\Sigma$ has a naturally associated size $|x|$,
such as the number of transitions or grammar rules.
For $x \in X_{\Sigma}$, we write $\lang(x)$ for $\{w \in \Sigma^* \mid \mathit{mem}_X(x, w)\}$.

\begin{definition}[Pangram Problem]
{\em Pangram problem} (respectively, {\em perfect pangram problem}) for a language acceptor class $X$ is,
given a finite alphabet $\Sigma$ and a language acceptor $x \in X_{\Sigma}$ as an input,
to ask whether $\lang(x) \cap P_\Sigma$ (resp. $\lang(x) \cap E_\Sigma$) contains some member or not.
\end{definition}

Informally speaking, the pangram problem gives us a grammar of a language and asks if
there can be a pangram in the language.
In the rest of the paper, we measure the computationally complexity of (perfect) pangram
problems in terms of $|x| + |\Sigma|$.

Here are some easy facts on pangrams, from the state complexity perspective:
\begin{proposition}
$P_\Sigma$ is a regular language.
The minimum deterministic finite automata recognizing $P_\Sigma$ has $2^{|\Sigma|}$ states.
\end{proposition}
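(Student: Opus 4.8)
The plan is to establish regularity together with the upper bound by exhibiting one explicit automaton, and then to obtain the matching lower bound through the Myhill--Nerode theorem.

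First I would build a \DFA{} whose state set is the power set $2^{\Sigma}$, reading each state $S \subseteq \Sigma$ as ``the set of symbols encountered so far''. The initial state is $\emptyset$, the transition on reading $\sigma$ from state $S$ is $S \cup \{\sigma\}$, and the sole accepting state is $\Sigma$ itself. By a trivial induction any string $w$ drives this machine to the state equal to its \emph{content} $\mathrm{cont}(w)$, the set of symbols occurring in $w$; hence $w$ is accepted exactly when $\mathrm{cont}(w) = \Sigma$, i.e. exactly when $w$ is a pangram. This automaton has $2^{|\Sigma|}$ states, so $P_\Sigma$ is regular and the minimal \DFA{} has at most $2^{|\Sigma|}$ states.

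For the lower bound I would argue that there are at least $2^{|\Sigma|}$ Myhill--Nerode classes, by showing the class of a string $u$ is determined precisely by $\mathrm{cont}(u)$. The easy direction is immediate: if $\mathrm{cont}(u) = \mathrm{cont}(v)$ then for every $w$ we have $\mathrm{cont}(uw) = \mathrm{cont}(u) \cup \mathrm{cont}(w) = \mathrm{cont}(vw)$, so $uw$ and $vw$ are pangrams or non-pangrams together. The content of the argument is the separating direction. Given strings with $\mathrm{cont}(u) \neq \mathrm{cont}(v)$, after possibly swapping their roles I may pick a symbol $\sigma \in \mathrm{cont}(u) \setminus \mathrm{cont}(v)$, and take any $w$ with $\mathrm{cont}(w) = \Sigma \setminus \{\sigma\}$. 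Then $\mathrm{cont}(uw) = \mathrm{cont}(u) \cup (\Sigma \setminus \{\sigma\}) = \Sigma$ since $\sigma \in \mathrm{cont}(u)$, so $uw \in P_\Sigma$, whereas $vw$ contains no occurrence of $\sigma$ and so $vw \notin P_\Sigma$; thus $w$ separates $u$ and $v$.

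Finally, since each of the $2^{|\Sigma|}$ subsets $S \subseteq \Sigma$ is realized as the content of some string (for instance the concatenation of its elements, with $\emptyset$ realized by $\varepsilon$), there are exactly $2^{|\Sigma|}$ distinct equivalence classes, and together with the upper bound this pins the minimal \DFA{} at $2^{|\Sigma|}$ states. I expect the only genuinely delicate point to be the choice of separating suffix: the naive attempt of appending the complement of $\mathrm{cont}(u)$ fails to separate nested contents, since it turns \emph{both} strings into pangrams, whereas ``everything except the witnessing symbol $\sigma$'' separates uniformly and is what makes the lower bound go through.
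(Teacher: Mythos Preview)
Your argument is correct and follows exactly the route the paper indicates: the paper's entire proof is the one-line hint ``Count the number of Myhill--Nerode equivalence classes,'' and you have carried out precisely that count, including the separating-suffix construction and the observation that every subset is realised. Your explicit $2^{|\Sigma|}$-state \DFA{} for the upper bound is a harmless extra (Myhill--Nerode already gives both regularity and the exact state count once the classes are identified), but it does not diverge from the paper's approach.
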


\begin{proposition}
$E_\Sigma$ is a regular language (actually a finite language.)
The minimum deterministic finite automata recognizing $E_\Sigma$ has $2^{|\Sigma|}+1$ states.
\end{proposition}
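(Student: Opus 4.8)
The plan is to exhibit a deterministic finite automaton with exactly $2^{|\Sigma|}+1$ states recognizing $E_\Sigma$, and then to prove its minimality via the Myhill--Nerode theorem. Finiteness, and hence regularity, is immediate: every $w \in E_\Sigma$ uses each symbol of $\Sigma$ exactly once, so $w$ is a permutation of the alphabet; thus $E_\Sigma$ consists of exactly $|\Sigma|!$ strings, each of length $|\Sigma|$. A finite language is regular, so the first sentence needs no further work; the content of the proposition is the precise state count.

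For the upper bound I would reuse the subset construction behind the preceding proposition for $P_\Sigma$, with one modification. Take as states all subsets $S \subseteq \Sigma$ together with a single extra dead state $\bot$; let $\emptyset$ be the start state and $\Sigma$ the unique accepting state. The transition from $S$ on input $\sigma$ goes to $S \cup \{\sigma\}$ if $\sigma \notin S$, and to $\bot$ otherwise, with $\bot$ absorbing on all symbols. Reading a string $w$ from $\emptyset$ then reaches $\Sigma$ if and only if $w$ contains every symbol exactly once and never repeats one, i.e. $w \in E_\Sigma$. This automaton has $2^{|\Sigma|}+1$ states, and the single extra state over the $P_\Sigma$ automaton is precisely $\bot$: re-reading an already-seen symbol, which was harmless for ordinary pangrams, is now fatal and so requires a genuine reject sink.

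For the matching lower bound, the clean observation to isolate is that the set of accepted continuations from state $S$ is exactly $E_{\Sigma \setminus S}$ (the perfect pangrams over the not-yet-used symbols), while the set of accepted continuations from $\bot$ is empty. Each subset $S$ is reachable by reading any enumeration of its elements, and $\bot$ is reachable by reading $\sigma\sigma$ for some $\sigma$, so all states are reachable (assuming $|\Sigma| \ge 1$; the degenerate case $|\Sigma| = 0$ can be noted separately). For distinguishability, given distinct subsets $S \neq T$, choose any string $v$ enumerating $\Sigma \setminus S$; then $v \in E_{\Sigma \setminus S}$ but $v \notin E_{\Sigma \setminus T}$ because $\Sigma \setminus S \neq \Sigma \setminus T$, so $v$ separates $S$ from $T$. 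The same enumeration shows $E_{\Sigma \setminus S} \neq \emptyset$, which separates every $S$ from $\bot$. Hence all $2^{|\Sigma|}+1$ states lie in distinct Myhill--Nerode classes, proving minimality.

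The main obstacle I anticipate is the lower bound rather than the construction: one must argue carefully that $\bot$ constitutes its own equivalence class, distinct from every subset state, and that the characterization of each right-language as $E_{\Sigma \setminus S}$ is exactly right. Once that characterization is pinned down, distinguishability follows from the elementary fact that distinct alphabets have disjoint sets of perfect pangrams, and the $+1$ in the count is explained entirely by the forced dead state.
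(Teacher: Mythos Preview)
Your proposal is correct and follows exactly the approach the paper indicates: the paper's entire proof is the single sentence ``Count the number of Myhill--Nerode equivalence classes,'' and you have simply carried that out in full, exhibiting the $2^{|\Sigma|}+1$-state DFA and verifying reachability and pairwise distinguishability of the subset states together with the dead state. Your noting of the degenerate case $|\Sigma|=0$ is a nice extra touch the paper omits.
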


\begin{proof}
Count the number of Myhill-Nerode equivalence classes.
\end{proof}

These exponential natures block us from giving efficient algorithms
by a naive construction for many decision problems on pangrams,
as we see in the following sections.


\section{Hardness of Pangram Problems}

Firstly, the problem is \NP-hard already for \DFA.

\begin{theorem} \label{thm:dfa-e-nph}
Perfect pangram problem for \DFA{} is \NP-hard.
\end{theorem}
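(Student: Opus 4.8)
The plan is to reduce from the Directed Hamiltonian Path problem, which is well known to be \NP-complete: given a directed graph $G = (V, E)$, decide whether some path visits every vertex exactly once. The key observation is that a perfect pangram already forces every alphabet symbol to occur exactly once. Hence, if we identify the alphabet with the vertex set, the perfect-pangram side of the intersection supplies the ``visit each vertex exactly once'' requirement for free, and the automaton only has to certify that consecutive letters are joined by an edge of $G$.

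Concretely, I would set $\Sigma = V$ and build a \DFA{} whose state set is $\{q_0\} \cup V \cup \{q_\bot\}$, where $q_0$ is the initial state and $q_\bot$ is a rejecting sink. From $q_0$, reading any vertex $v$ moves to state $v$ (the first letter of a walk is unconstrained). From a state $u \in V$, reading $v$ moves to state $v$ if $(u,v) \in E$, and to $q_\bot$ otherwise. Every vertex-state is accepting, while $q_0$ and $q_\bot$ are non-accepting. By construction, $\lang(x)$ is exactly the set of nonempty directed walks of $G$, each written as the sequence of vertices it visits.

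I would then establish the correspondence: a string $w$ lies in $\lang(x) \cap E_\Sigma$ precisely when $w$ is a permutation of $V$ that spells out a directed walk, i.e.\ a Hamiltonian path of $G$. Therefore the constructed automaton admits a perfect pangram in its language if and only if $G$ has a Hamiltonian path. Since the \DFA{} has $|V| + 2$ states and $O(|V|^2)$ transitions, the reduction is computable in polynomial time, which yields \NP-hardness.

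The construction itself is routine; the step deserving the most care is verifying the tightness of the correspondence, in particular that no accepted string of the wrong length can slip into $E_\Sigma$. This is settled by observing that membership in $E_\Sigma$ for a nonempty alphabet forces length exactly $|V|$ with all letters distinct, while the automaton enforces edge-adjacency of consecutive letters; the two constraints together characterize Hamiltonian paths with no slack, so the equivalence holds in both directions.
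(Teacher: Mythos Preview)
Your proposal is correct and is essentially the same construction as the paper's: both reduce from Directed Hamiltonian Path by taking $\Sigma = V$ and building a \DFA{} with a fresh start state, a fail sink, and vertex-states whose transitions mirror the edges of $G$, so that accepted strings are walks and the perfect-pangram constraint supplies the ``each vertex exactly once'' condition. The only cosmetic difference is whether the start state is accepting (the paper) or not (you), which is immaterial since the empty string is not a perfect pangram over a nonempty alphabet.
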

\begin{proof}
The proof is by reduction from Hamiltonian Path Problem, which asks,
given a directed graph $(V, E)$, if there is a simple path visiting
all the nodes exactly once.

The instance of Hamiltonian Path Problem can be converted to a perfect
pangram problem for \DFA{} as follows.
We let the alphabet $\Sigma = V$ and
the set of states of the automaton to be $Q = V \cup \{q_\text{src}, q_\text{fail}\}$.
The initial and the final states of the automaton is
$q_\text{src}$ and $Q \setminus \{q_\text{fail}\}$,
respectively. We let the transition function 
\begin{align*}
  \delta(q_\text{src}, u) &= u \text{ for all } u \in V \\
  \delta(v, u) &= u \text{ for all } v,u \in V \text{ where } (v, u) \in E\\
  \delta(v, u) &= q_\text{fail} \text{ for all } v,u \in V \text{ where } (v, u) \notin E\\
  \delta(q_\text{fail}, u) &=  q_\text{fail} \text{ for all } u \in V
\end{align*}
Then, if there is a Hamiltonian path $v_1 \rightarrow \cdots \rightarrow v_n$,
there exists a pangram $v_1 \cdots v_n$ in the language accepted by the automaton,
and vice versa.
\end{proof}

\begin{theorem} \label{thm:dfa-p-nph}
Pangram problem for \DFA{} is \NP-hard.
\end{theorem}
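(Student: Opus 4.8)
The plan is to reduce again from the Hamiltonian Path Problem, reusing the graph-as-automaton idea of Theorem~\ref{thm:dfa-e-nph} but patching the one place where that construction breaks for ordinary (non-perfect) pangrams. The subtlety is that a pangram only demands each symbol \emph{at least} once, so the automaton of Theorem~\ref{thm:dfa-e-nph}, which accepts every walk in $G$, would be satisfied by any walk that merely \emph{covers} all vertices while revisiting some of them. Such covering walks exist precisely when the condensation of $G$ admits a spanning (Hamiltonian) path, and this is decidable in polynomial time; hence a verbatim reuse would prove nothing. The repetition permitted by pangrams is exactly the obstacle to overcome.

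The key idea I would use is a pigeonhole trick: if I constrain the automaton to accept only strings of length exactly $n = |V| = |\Sigma|$, then any pangram among them must place each of the $n$ symbols into $n$ positions, forcing every symbol to occur exactly once. Thus over such a language ``pangram'' and ``perfect pangram'' coincide, and the analysis of Theorem~\ref{thm:dfa-e-nph} carries over. Concretely, I would build a layered automaton whose states are $(v, i)$ for $v \in V$ and $0 \le i \le n$, plus a source $q_\text{src}$ and a sink $q_\text{fail}$. From $q_\text{src}$ the first letter $u$ moves to $(u, 1)$; from $(v, i)$ with $i < n$ a letter $u$ moves to $(u, i+1)$ when $(v,u) \in E$ and to $q_\text{fail}$ otherwise; the only accepting states are the $(v, n)$. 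This automaton has $O(n^2)$ states and is clearly computable in polynomial time, so the reduction is polynomial.

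It then remains to verify the correspondence, namely that $\lang(M) \cap P_\Sigma \neq \emptyset$ iff $G$ has a Hamiltonian path. The forward direction is the pigeonhole argument above: an accepted pangram has length $n$ and contains all $n$ vertices, hence is a permutation of $V$ that follows edges of $G$, i.e.\ a Hamiltonian path. The reverse direction maps a Hamiltonian path $v_1 \to \cdots \to v_n$ to the string $v_1 \cdots v_n$, which $M$ accepts and which is a pangram. I expect the only real content to be the pigeonhole observation that collapses pangrams into perfect pangrams under the length-$n$ restriction; once that is in place, the remaining verification is routine.
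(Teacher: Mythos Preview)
Your proof is correct and rests on the same key idea as the paper: restricting accepted strings to length $|\Sigma|$ forces any pangram, by pigeonhole, to be a perfect pangram. The paper packages this more modularly as a reduction from the perfect pangram problem for \DFA{}---take the product of the given DFA with a polynomial-size DFA for $\{w : |w|=|\Sigma|\}$---whereas you inline that product directly into the construction of Theorem~\ref{thm:dfa-e-nph} to obtain your layered automaton; the resulting machines are essentially identical.
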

\begin{proof}
Reduction from the perfect pangram problem for \DFA.
Note that for each $\Sigma$, the set $S_\Sigma = \{w \mid |w|=|\Sigma|\}$
is a regular language and there exists a deterministic finite automaton $s_\Sigma$
for the language, with size polynomial in $\Sigma$.

Then, given a problem instance $(\Sigma, x)$ of the perfect pangram problem,
$\lang(x)$ contains a perfect pangram
if and only if $\lang(x) \cap \lang(s_\Sigma)$ contains a pangram,
since $\lang(s_\Sigma)$ fixes the string length and makes all pangrams also perfect pangrams.
Computing the deterministic finite automaton representing the
the intersection is well-known to be done in polynomial time by the product construction.
\end{proof}

The automaton constructed in the proof of Theorem~\ref{thm:dfa-p-nph} represents
a {\em finite} language. In other words, the \NP-hardness arises even from such a
restricted class of languages.

\begin{proposition}
Perfect pangram problem and pangram problem are \NP-hard for \FinDFA,
where \FinDFA{} is the subclass of \DFA{} representing only finite languages.
\end{proposition}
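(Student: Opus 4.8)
The plan is to reuse the two constructions already in hand and observe that both can be made to yield \emph{finite} languages at no extra cost, by intersecting with the length-fixing automaton $s_\Sigma$ introduced in the proof of Theorem~\ref{thm:dfa-p-nph}.

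First I would dispatch the pangram problem, where nothing new is actually required. The automaton built in the proof of Theorem~\ref{thm:dfa-p-nph} recognizes $\lang(x) \cap \lang(s_\Sigma)$, and since every string accepted by $s_\Sigma$ has length exactly $|\Sigma|$, this language is finite. Thus that very automaton is already a member of $\FinDFA$, and the reduction of Theorem~\ref{thm:dfa-p-nph} is simultaneously a polynomial-time reduction from the perfect pangram problem for \DFA{} (hence, chaining through Theorem~\ref{thm:dfa-e-nph}, from Hamiltonian Path) to the pangram problem for $\FinDFA$.

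For the perfect pangram problem I would start from the automaton $x$ of Theorem~\ref{thm:dfa-e-nph}, whose language may well be infinite when the input graph contains a cycle, and simply form the product automaton for $\lang(x) \cap \lang(s_\Sigma)$. The key observation is that every perfect pangram on $\Sigma$ has length exactly $|\Sigma|$, so $E_\Sigma \subseteq \lang(s_\Sigma)$ and therefore
\[
  \bigl(\lang(x) \cap \lang(s_\Sigma)\bigr) \cap E_\Sigma = \lang(x) \cap E_\Sigma .
\]
Hence intersecting with $s_\Sigma$ never changes whether a perfect pangram is present, yet it forces the accepted language to be finite. Since $s_\Sigma$ has size polynomial in $|\Sigma|$ and the product construction runs in polynomial time, the whole reduction remains polynomial, establishing \NP-hardness of the perfect pangram problem for $\FinDFA$.

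The only point demanding care is the preservation of answers, and this is precisely where the asymmetry between the two problems lives: for perfect pangrams the inclusion $E_\Sigma \subseteq \lang(s_\Sigma)$ makes the intersection harmless, whereas for ordinary pangrams the intersection with $s_\Sigma$ is exactly the device that upgrades ``pangram'' to ``perfect pangram'' already exploited in Theorem~\ref{thm:dfa-p-nph}. Beyond verifying this, every remaining step is the routine product construction together with the elementary fact that a length-bounded regular language is finite, so I anticipate no genuine obstacle.
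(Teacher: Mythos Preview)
Your proposal is correct and follows essentially the same route as the paper: the paper simply observes that the product automaton $\lang(x)\cap\lang(s_\Sigma)$ built in Theorem~\ref{thm:dfa-p-nph} already belongs to \FinDFA, and hence that single construction establishes \NP-hardness of both problems for \FinDFA. Your write-up spells out more explicitly the justification for the perfect-pangram case (via $E_\Sigma \subseteq \lang(s_\Sigma)$), but the underlying reduction is the same.
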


Interestingly, a language whose complement
is finite (called {\em cofinite} languages) also exhibits the same hardness on
the perfect pangram problem.
Note that the normal pangram problem for \CofinDFA{} is trivial; a cofinite language
always contains a pangram because there are infinitely many pangrams.

\begin{proposition}
Perfect pangram problem is \NP-hard for \CofinDFA,
where \CofinDFA{} is the subclass of \DFA{} representing only cofinite languages.
\end{proposition}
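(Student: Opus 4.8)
The plan is to reduce from the perfect pangram problem for \FinDFA{}, which is \NP-hard by the preceding proposition. The guiding intuition to overcome is that a cofinite language is forbidden from omitting more than finitely many strings, so at first sight it seems it could never steer clear of the $|\Sigma|!$ perfect pangrams populating $E_\Sigma$. The resolution is to concentrate the entire (finite) complement onto the length-$|\Sigma|$ stratum $S_\Sigma$, which is exactly where all of $E_\Sigma$ lives, so that a cofinite language can still selectively exclude any perfect pangram while accepting everything of a different length.

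Concretely, given a \FinDFA{} instance $(\Sigma, x)$ with $M = \lang(x)$ finite, I would construct the language
\[
  L \;=\; \{\, w \in \Sigma^* \mid |w| \neq |\Sigma| \,\} \;\cup\; \bigl(M \cap \lang(s_\Sigma)\bigr),
\]
where $s_\Sigma$ is the polynomial-size automaton for $S_\Sigma$ from the proof of Theorem~\ref{thm:dfa-p-nph}. First I would check that $L$ is cofinite: its complement is contained in $S_\Sigma \setminus M \subseteq S_\Sigma$, a finite set, so $L$ indeed belongs to \CofinDFA{}. Next I would build a \DFA{} for $L$ of size polynomial in $|x| + |\Sigma|$ by running $x$ on the input while maintaining a length counter capped just above $|\Sigma|$: every string of length $\neq |\Sigma|$ is accepted outright, while a string of length exactly $|\Sigma|$ is accepted precisely when $x$ accepts it. This is the standard product-with-counter construction and incurs only an $O(|\Sigma|)$ blow-up, so the reduction runs in polynomial time.

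Finally I would verify correctness. Since every perfect pangram has length exactly $|\Sigma|$, we have $E_\Sigma \subseteq S_\Sigma$, and among the length-$|\Sigma|$ strings $L$ accepts exactly those in $M$; hence $L \cap E_\Sigma = M \cap E_\Sigma$, so $L$ contains a perfect pangram if and only if $M$ does. The step I expect to be the main conceptual obstacle — and the one worth stressing — is that this works \emph{without} ever recognizing $E_\Sigma$ itself, whose minimal automaton is exponential in $|\Sigma|$: because all strings of length $\neq |\Sigma|$ are swept into $L$ for free (and are too short or too long to be perfect pangrams), the complement is automatically finite and the automaton need only test membership in $M$ on the single relevant length, never pangram-ness directly. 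The only care required is to confirm that no spurious perfect pangram enters $L$ through the length-$(\neq|\Sigma|)$ part, which is immediate.
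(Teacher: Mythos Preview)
Your argument is correct and is essentially the paper's own proof: both reductions union the given language with $\{w \mid |w|\neq|\Sigma|\}$ so that the result is cofinite while its intersection with $E_\Sigma$ is unchanged. The paper reduces directly from general \DFA{} and simply forms $\lang(g)\cup\{w\mid |w|\neq|\Sigma|\}$; your restriction to \FinDFA{} and the extra intersection $M\cap\lang(s_\Sigma)$ are harmless but unnecessary, since the complement of the union already lies inside $S_\Sigma$ regardless of whether $M$ is finite.
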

\begin{proof}
Reduction from the perfect pangram problem of general \DFA{}.
Assume a DFA $g$ is given, and
let $e_\Sigma$ be a DFA representing the set $\{w \mid |w|\neq|\Sigma|\}$,
which can be constructed in $|\Sigma|+1$ states.
Then, $\lang(g)$ contains a perfect pangram if and only if $\lang(g) \cup \lang(e_\Sigma)$ does.
The latter is a cofinite language.
\end{proof}

Now, let us turn our eyes to the containment in \NP{}, to complete the 
proof of the \NP-completeness.

\begin{theorem} \label{thm:cfg-e-np}
Perfect pangram problem is in \NP{} for \CFG.
\end{theorem}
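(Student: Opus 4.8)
The plan is to exhibit a polynomially bounded certificate together with a polynomial-time verifier, which is exactly what membership in $\NP{}$ requires. The crucial structural observation is that every perfect pangram on $\Sigma$ has length \emph{exactly} $|\Sigma|$: since each of the $|\Sigma|$ letters must occur exactly once, a perfect pangram is nothing but a permutation of the alphabet. Hence any witness that $\lang(x) \cap E_\Sigma$ is nonempty is a single string $w$ of length $|\Sigma|$, whose description size is polynomial in the input size $|x| + |\Sigma|$.

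With this in hand, the nondeterministic procedure is straightforward. First I would guess a string $w \in \Sigma^*$ of length $|\Sigma|$ as the certificate. Then I would verify it in two independent stages. The first stage confirms $w \in E_\Sigma$ by scanning $w$ and checking that every symbol of $\Sigma$ occurs exactly once; this is a trivial linear-time count. The second stage confirms $w \in \lang(x)$, i.e.\ it solves the context-free membership (word) problem for the grammar $x$ on the string $w$.

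The only step carrying nontrivial content is the second stage, and here I would invoke the classical fact that context-free membership is solvable in polynomial time. Concretely, one converts $x$ to Chomsky Normal Form---a transformation that blows up the grammar size only polynomially---and runs the CYK algorithm, which decides $w \in \lang(x)$ in time $O(|w|^3 \cdot |x|)$. Because $|w| = |\Sigma|$, this is polynomial in $|x| + |\Sigma|$. Both stages thus run in polynomial time, so the perfect pangram problem for \CFG{} lies in $\NP{}$. I anticipate no real obstacle: the bounded length of perfect pangrams keeps the certificate small, and all the genuine difficulty is absorbed by the well-known polynomial-time solvability of CFG parsing.
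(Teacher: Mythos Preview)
Your proposal is correct and follows essentially the same approach as the paper: guess a perfect pangram of length $|\Sigma|$ as a polynomially bounded certificate, then verify membership in the context-free language in polynomial time. The paper's proof is terser and simply invokes the fact that CFG membership has polynomial combined complexity, whereas you spell out the CYK route explicitly, but the argument is the same.
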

\begin{proof}
The witness, i.e. a perfect pangram string, has length $\Sigma$ and can be guessed by
a nondeterministic Turing machine in linear time. Then, since the combined complexity
of checking the membership of the string against the acceptor is in polynomial time,
the whole step of checking perfect pangram problem is in \NP.
\end{proof}

The result extends to classes with higher expressiveness as long as they
permit \NP-time membership judgment, but we need some care.
The membership check needs to be in \NP{} in combined complexity,
meaning that both the string and the language acceptor are counted as inputs.

For instance, although the membership problem for the higher order
version of \CFG{} known as IO/OI-hierarchy~\cite{Damm82} is in \NP~\cite{IM08},
the complexity is only with respect to the string size and a grammar is considered to be fixed.
If we take the grammar as a part of input, the membership problem is
\textsf{EXP}-complete~\cite{TK86} already for indexed languages~\cite{Aho68,Fischer68a},
which is only one level above \CFG.
It remains \textsf{PSPACE}-complete even after limiting to the
subclass of grammars without $\epsilon$-rules~\cite{Osterholzer15}.

Compared to perfect pangrams, a pangram may have arbitrary length.
It may not be easy to find a short witness directly. For instance,
there exists a context-free grammar whose shortest member string is exponentially
larger than the grammar. Nevertheless, the problem is proved to stay
in \NP{} by applying a simple grammar transformation.

For preparation,
let's call a string $u$ is a {\em subsequence} of a string $w$ and write $u \sqsubseteq w$
if $u$ is obtained by
deleting several letters from $w$ without changing the order of the remaining letters.
The {\em downward closure} of a language $L$ is the set of strings
  $L_\downarrow = \{u \mid u \sqsubseteq w \text{ for some } w \in L\}$.
The important fact is that $L$ contains a pangram if and only if
$L_\downarrow$ contains a perfect pangram.

\begin{theorem} \label{thm:cfg-p-np}
Pangram problem is in \NP{} for \CFG.
\end{theorem}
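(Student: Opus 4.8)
The plan is to reduce the pangram problem for \CFG{} to the perfect pangram problem for \CFG, which Theorem~\ref{thm:cfg-e-np} already places in \NP, by way of the equivalence noted just above: a language $L$ contains a pangram if and only if its downward closure $L_\downarrow$ contains a perfect pangram. The crux is therefore to show that, given a context-free grammar $G$ generating $\lang(G)$, one can compute in polynomial time a context-free grammar $G_\downarrow$ generating exactly $\lang(G)_\downarrow$.

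For the construction, I would introduce, for each terminal $a \in \Sigma$, a fresh nonterminal $A_a$ together with the two rules $A_a \to a$ and $A_a \to \epsilon$, and then rewrite every rule of $G$ by replacing each occurrence of a terminal $a$ on its right-hand side with $A_a$. This adds only $|\Sigma|$ nonterminals and $2|\Sigma|$ rules and touches each existing rule once, so $|G_\downarrow|$ is linear in $|G| + |\Sigma|$ and the transformation runs in polynomial time.

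Next I would verify $\lang(G_\downarrow) = \lang(G)_\downarrow$ by two inclusions. For one direction, any derivation in $G_\downarrow$ mirrors a derivation in $G$ except that each generated terminal may instead be erased via $A_a \to \epsilon$; the produced string is thus a subsequence of the string $G$ would have produced, giving $\lang(G_\downarrow) \subseteq \lang(G)_\downarrow$. Conversely, given $w \in \lang(G)$ and any subsequence $u \sqsubseteq w$, I would take the derivation of $w$ in $G$ and, in the corresponding derivation in $G_\downarrow$, fire $A_a \to \epsilon$ at exactly the positions deleted to reach $u$, yielding $\lang(G)_\downarrow \subseteq \lang(G_\downarrow)$.

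Combining the pieces, $\lang(G)$ contains a pangram if and only if $\lang(G)_\downarrow = \lang(G_\downarrow)$ contains a perfect pangram, and the latter is decidable in \NP{} by Theorem~\ref{thm:cfg-e-np} applied to the polynomially computable $G_\downarrow$. I expect the only real obstacle to be the bookkeeping in the correctness proof of $\lang(G_\downarrow) = \lang(G)_\downarrow$ — in particular handling right-hand sides that already mix terminals and nonterminals, and confirming that the $\epsilon$-rules do not enlarge the language beyond the downward closure — since once that equality is pinned down, membership in \NP{} follows immediately from composing a polynomial-time reduction with an \NP{} procedure.
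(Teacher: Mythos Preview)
Your proposal is correct and follows essentially the same route as the paper: reduce the pangram problem to the perfect pangram problem by computing, in polynomial time, a \CFG{} for the downward closure and then invoking Theorem~\ref{thm:cfg-e-np}. The only difference is in the concrete construction of the downward-closure grammar---the paper first passes to Chomsky normal form and adds $N \to \epsilon$ for every nonterminal, while you replace each terminal $a$ by a fresh $A_a$ with rules $A_a \to a \mid \epsilon$---but both are standard polynomial-time transformations yielding the same language, so the arguments are interchangeable.
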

\begin{proof}
Pangram problem reduces to perfect pangram problem as long as the downward
closure for \CFG{} can be constructed in polynomial time--and indeed it is.
Downward closure of \CFG{} can be computed in polynomial time,
for instance by constructing Chomsky normal form and then adding a rule $N \rightarrow \epsilon$
for all nonterminals $N$.
\end{proof}

Putting altogether and considering that conversion from \DFA{} to \CFG{} is trivially
done in polynomial time, we get the following summary of this section.

\begin{corollary}
Perfect pangram problem is
 \NP-complete for \FinDFA, \CofinDFA, \DFA, \NFA, and \CFG.
Pangram problem is
 \NP-complete for \FinDFA, \DFA, \NFA, and \CFG.
\end{corollary}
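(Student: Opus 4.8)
The plan is to assemble the NP-completeness claims from the hardness and membership results already established, using the fact that the five acceptor classes form a hierarchy under polynomial-time, language-preserving conversion. Concretely, $\FinDFA$ and $\CofinDFA$ are syntactic subclasses of $\DFA$, every $\DFA$ is literally an $\NFA$, and every $\NFA$ converts to an equivalent $\CFG$ in polynomial time (as noted just before the corollary, already for $\DFA$). First I would fix these conversions and observe that each one preserves the underlying language exactly, so that the answer to either pangram problem is invariant under the conversion.

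For the membership direction, I would argue top-down from $\CFG$. Theorem~\ref{thm:cfg-e-np} places the perfect pangram problem in \NP{} for $\CFG$, and Theorem~\ref{thm:cfg-p-np} does the same for the pangram problem. Since any instance over $\FinDFA$, $\CofinDFA$, $\DFA$, or $\NFA$ can be turned into an equivalent $\CFG$ instance in polynomial time, composing that conversion with the $\CFG$ decision procedure yields an \NP{} procedure for each of the smaller classes. This settles membership in \NP{} for both problems across every class in the list.

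For the hardness direction, I would argue bottom-up. The perfect pangram problem is already shown \NP-hard for $\FinDFA$ and for $\CofinDFA$ by the two propositions; since $\FinDFA \subseteq \DFA$, hardness propagates up to $\DFA$, and then along $\DFA \subseteq \NFA$ and the $\NFA$-to-$\CFG$ conversion to $\NFA$ and $\CFG$ (Theorem~\ref{thm:dfa-e-nph} also gives $\DFA$ directly). The pangram problem is \NP-hard for $\FinDFA$ by the corresponding proposition, and the same upward propagation gives $\DFA$, $\NFA$, and $\CFG$; here $\CofinDFA$ is correctly omitted, since the paper observes the pangram problem is trivial for cofinite languages. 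Combining the two directions yields \NP-completeness for every entry in the corollary.

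The only point that needs care---rather than a genuine obstacle---is verifying that each conversion preserves the \emph{exact} language, not merely the membership of a single witness, so that the properties ``contains a pangram'' and ``contains a perfect pangram'' are transported faithfully in both directions. The subclass inclusions are immediate and the $\NFA$-to-$\CFG$ construction is the standard one, so no new combinatorial argument is required beyond this routine check.
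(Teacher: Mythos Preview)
Your proposal is correct and matches the paper's approach exactly: the paper states only ``Putting altogether and considering that conversion from \DFA{} to \CFG{} is trivially done in polynomial time,'' and you have simply spelled out that assembly in full, drawing the \NP{} upper bounds down from Theorems~\ref{thm:cfg-e-np} and~\ref{thm:cfg-p-np} and the hardness up from the \FinDFA/\CofinDFA{} propositions via the inclusion chain. There is nothing to add or correct.
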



\section{Pangram-Cover Problems}

Viewing from the other side, asking if a language contains some pangram is
equivalent to ask if the complement of the language contains all pangrams,
and to negate the answer. Thus we obtain the following yet another hardness statement.

\begin{corollary}~\label{cor:e-conp}
Given a language acceptor $g$, the problem asking whether
$\lang(g) \supseteq E_\Sigma$ is \textsf{coNP}-complete for
\FinDFA, \CofinDFA, \DFA, \NFA, and \CFG.
\end{corollary}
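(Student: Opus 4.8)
The plan is to exploit the duality already flagged in the surrounding text: the cover problem ``$\lang(g) \supseteq E_\Sigma$'' is exactly the logical negation of ``$\overline{\lang(g)}$ contains a perfect pangram'', i.e.\ the perfect pangram problem run on the complement language. Since the perfect pangram problem is \NP-complete for all five classes (as established above), pushing this duality through should yield \textsf{coNP}-completeness, provided complementation is handled so as to keep the acceptor inside an admissible class.

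First I would establish membership in \textsf{coNP} uniformly. The complement of the cover problem asks whether there is a perfect pangram $w \notin \lang(g)$. Any such $w$ has length exactly $|\Sigma|$, so it can be guessed by a nondeterministic machine in linear time, and non-membership $w \notin \lang(g)$ is decidable in polynomial time for \DFA, \NFA, and \CFG{} (and hence for the subclasses \FinDFA{} and \CofinDFA). Thus the complement lies in \NP{} and the cover problem lies in \textsf{coNP} for all five classes.

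For \textsf{coNP}-hardness I would reduce from the complement of the perfect pangram problem, which is \textsf{coNP}-complete. For \DFA{} this is immediate: complementing the input automaton is polynomial (swap final and non-final states), producing $g$ with $\lang(g)=\overline{\lang(g')}$, whence $\lang(g)\supseteq E_\Sigma$ holds precisely when $\lang(g')$ contains no perfect pangram (the \textsf{coNP}-complete condition coming from Theorem~\ref{thm:dfa-e-nph}). Hardness then propagates to \NFA{} and \CFG{} at no cost, since the \DFA{} produced by the reduction is already an \NFA{} and converts to a \CFG{} in polynomial time, with the same cover-problem answer.

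The subtle case---and the step I expect to be the main obstacle---is \FinDFA{} and \CofinDFA, because \DFA{} complementation toggles finiteness: the complement of a finite language is cofinite and vice versa. A naive attempt to build a \emph{finite} acceptor for $E_\Sigma \cap \overline{\lang(g')}$ fails to be polynomial, since the minimal automaton for $E_\Sigma$ already has $2^{|\Sigma|}+1$ states. The resolution is that these two classes are mutually complementary and that the perfect pangram problem is \NP-complete for \emph{both} (by the two preceding propositions). Complementing a \CofinDFA{} yields a \FinDFA{} in polynomial time, so the complement of the perfect pangram problem for \CofinDFA{} reduces to the cover problem for \FinDFA; symmetrically, complementing a \FinDFA{} yields a \CofinDFA, giving \textsf{coNP}-hardness of the cover problem for \CofinDFA. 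Combining the uniform \textsf{coNP} membership with the per-class hardness then completes the proof.
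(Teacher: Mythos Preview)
Your argument is correct and matches the paper's own proof essentially line for line: the paper establishes \textsf{coNP} membership via a linear-size non-member perfect pangram witness, and obtains \textsf{coNP}-hardness by observing that the negated problem is the perfect pangram problem on the complement language, then appealing to the \NP-hardness of the perfect pangram problem for \FinDFA, \CofinDFA, and \DFA{} together with polynomial-time complementation. Your treatment of the \FinDFA/\CofinDFA swap under complementation is spelled out more explicitly than in the paper, but the underlying reasoning is identical.
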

\begin{proof}
It is in \textsf{coNP} because the negative witness can be given by a non-member
perfect pangram, which is in linear size.
It is \textsf{coNP}-hard because the negation of this problem is equivalent to the
perfect pangram problem for the complement of $g$. Since the perfect pangram problem
is \NP-hard for \FinDFA, \CofinDFA, and \DFA{}, for the language acceptors
that can subsume the polynomial-time constructible complement of either of the three classes,
the result follows.
\end{proof}

The pangram counterpart of this ``contains all'' question becomes truly harder
for \NFA{} and \CFG{}, which are not polynomial-time closed under complement.

\begin{theorem}
Given a language acceptor $g$, the problem asking whether
$\lang(g) \supseteq P_\Sigma$ is \textsf{coNP}-complete for \CofinDFA{} and \DFA,
{\em undecidable} for \CFG, and \textsf{PSPACE}-complete for \NFA.
\end{theorem}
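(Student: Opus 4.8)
The plan is to handle the three acceptor classes separately, driven by one observation: the \emph{non}-pangrams $\overline{P_\Sigma}$ form a regular language with a tiny NFA—nondeterministically guess a missing letter $\sigma$ and verify it never occurs, using $1+|\Sigma|$ states—whereas $P_\Sigma$ itself admits only exponential-size DFAs and NFAs. Rewriting the cover condition as $\lang(g)\supseteq P_\Sigma \iff \lang(g)\cup\overline{P_\Sigma}=\Sigma^*$ recasts the question as \emph{universality}, whose hardness will track exactly the closure-under-complement behaviour of each class. For \DFA{} and \CofinDFA{} I would argue through the negation: the cover fails precisely when $\overline{\lang(g)}$ contains a pangram. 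For a \DFA{} the complement is another \DFA{} built in polynomial time, and by Theorem~\ref{thm:cfg-p-np} (via the trivial \DFA$\to$\CFG{} conversion) deciding whether a \DFA{} language contains a pangram is in \NP; hence the cover problem is in \textsf{coNP}. For \textsf{coNP}-hardness I would run the pangram problem backwards—given an \NP-hard instance $h$, feed its complement as $g$ so that the cover fails iff $\lang(h)$ has a pangram. Letting $h$ range over \FinDFA{} (where the pangram problem is still \NP-hard) makes $g$ range over \CofinDFA{}, and since the complement of a \CofinDFA{} is a \FinDFA{} whose pangram problem is again in \NP, both classes come out \textsf{coNP}-complete.

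For \CFG{} I would reduce from the classical (undecidable) universality problem $\lang(G)=\Gamma^*$. Writing $\Gamma=\{a_1,\ldots,a_k\}$, introduce a fresh letter $\#$, set $\Sigma=\Gamma\cup\{\#\}$, and fix the prefix $\pi=a_1\cdots a_k\#$ that lists every letter of $\Sigma$, so that $\pi w$ is automatically a pangram for every $w\in\Gamma^*$. With $I=\pi\Gamma^*$ I would build the grammar for
\[ \lang(G') \;=\; \overline{I}\;\cup\;\pi\,\lang(G), \]
which is context-free and polynomial-time constructible because $\overline{I}$ is regular and prepending a fixed word preserves context-freeness. Every pangram outside $I$ already lies in $\overline{I}$, so the only pangrams $G'$ can possibly miss are those of the form $\pi w$, and $\pi w\in\lang(G')$ iff $w\in\lang(G)$. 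Hence $\lang(G')\supseteq P_\Sigma$ iff $\lang(G)=\Gamma^*$, transporting undecidability to the pangram-cover problem.

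The very same encoding yields \textsf{PSPACE}-hardness for \NFA: here $\overline{I}$ (the complement of a small DFA for $\pi\Gamma^*$) and $\pi\,\lang(g)$ (a fixed prefix chained into $g$) each have polynomial-size NFAs, so does their union, and the reduction now starts from the \textsf{PSPACE}-complete \NFA{} universality problem. For the matching upper bound I would invoke the universality reformulation directly: an NFA for $\lang(g)\cup\overline{P_\Sigma}$ has polynomial size—this is exactly where the small NFA for $\overline{P_\Sigma}$ is needed—and \NFA{} universality is in \textsf{PSPACE}.

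The main obstacle is not any single case but the \emph{encoding trick} and its bookkeeping: I must verify that prepending $\pi$ genuinely makes $\pi w$ a pangram, that $\overline{I}$ sweeps up \emph{all} remaining pangrams so that covering collapses exactly to universality on the $I$-slice, and—critically for the \NFA{} lower bound—that every auxiliary language stays polynomial in NFA size, which is why the construction is routed through $\overline{I}$ rather than $\overline{P_\Sigma}$ (the latter has no small NFA). The conceptual crux is the contrast with the earlier $E_\Sigma$-cover result in Corollary~\ref{cor:e-conp}: a missing perfect pangram is short and hence a compact witness, whereas a missing ordinary pangram may be long, and it is precisely this loss of a short witness that pushes \NFA{} up to \textsf{PSPACE} and \CFG{} all the way to undecidability.
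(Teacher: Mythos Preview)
Your argument is correct and matches the paper's strategy in every part: complement duality for the \DFA{}/\CofinDFA{} cases, a prefix-encoding reduction from universality for the \CFG{} and \NFA{} lower bounds, and the small NFA for $\overline{P_\Sigma}$ plus \NFA{} universality for the \textsf{PSPACE} upper bound. The only differences are cosmetic---the paper keeps the same alphabet, using $w=\sigma_1\cdots\sigma_n$ and the set $\mathit{NS}_w$ of strings not starting with $w$ in place of your $\pi$ and $\overline{I}$, and it argues \textsf{coNP} membership for \DFA{} via a pumping-bounded short missing pangram rather than your reduction to the \NP{} pangram problem on the complement. One slip to fix: your closing parenthetical says $\overline{P_\Sigma}$ ``has no small NFA'', directly contradicting your own opening sentence; you presumably meant $P_\Sigma$.
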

\begin{proof}
The \textsf{coNP}-hardness follows by the same argument as Corollary~\ref{cor:e-conp}.
The \textsf{coNP} containment for the \CofinDFA{} and the \DFA{} cases holds because
for DFA we can still always find a short negative witness by a slight modification
of the well-known {\em pumping lemma}.

The undecidability for \CFG{} is proved by the reduction from the universality problem,
which asks whether $\lang(g) = \Sigma^*$ or not. This problem is known to be undecidable~\cite{GR63}.
Now, assume a CFG $g$ over alphabet $\Sigma = \{\sigma_1, \ldots, \sigma_n\}$ is given.
Let $w = \sigma_1\cdots\sigma_n$, $w\lang(g)$ the language obtained by
concatenating $w$ in front of each string in $\lang(g)$, and $\mathit{NS}_w$ the set of strings
that do {\em not} start with $w$. Then it is easy to verify:
\begin{align*}
       & \lang(g) = \Sigma^*
  \\
  \iff & w\lang(g) \cup \mathit{NS}_w = \Sigma^*
  \\
  \iff & w\lang(g) \cup \mathit{NS}_w \supseteq P_\Sigma.
\end{align*}
Since, $\mathit{NS}_w$ is a context free language and \CFG{} is closed under
concatenation and union, the decision procedure of the last inequality
will decide the universality problem as well, which is impossible.

The \textsf{PSPACE}-hardness for \NFA{} is similarly derived from the universality
problem (known to be \textsf{PSPACE}-complete for \NFA~\cite{MS72,Hunt73})
by using the fact that $\mathit{NS}_w$
is representable in polynomial-size \NFA{} and so for the union.

The containment in \textsf{PSPACE} can be proved by the backward reduction.
Notice that the set $\overline{P_\Sigma}$ of non-pangrams
can be represented by a polynomial-size NFA by seeing the language as
 $\bigcup_{a \in \Sigma} (\Sigma \setminus \{a\})*$. Thus, by testing the
universality of $\lang{g} \cup \overline{P_\Sigma}$ we can decide whether
$\lang{g} \supseteq P_\Sigma$ or not in \textsf{PSPACE}.
\end{proof}


\section{On Strictly Locally Testable Languages}

It should be natural to ask if we can find a weaker class of language acceptors
that admits tractable (perfect) pangram problems. The problems were hard even for
finite languages, but they are not the only sub-regular class.

Our first choice of the trial is {\em local languages}~\cite{Medvedev64},
or equivalently {\em 2-strictly locally testable languages}~\cite{MP71,BS73}.
For an integer $k \geq 2$, $k$-strictly locally testable languages ($k$-slt for short) are
represented by three sets $S \subseteq \Sigma^{k-1}$,
$I \subseteq \Sigma^k$, and $E \subseteq \Sigma^{k-1}$.
A string $w$ belongs to the $k$-slt language denoted by $(S,I,E)$ if
and only if $w$ begins with a prefix in $S$, ends with a suffix in $E$,
and all $k$ consecutive substrings in $w$ are in $I$.
We call $k$\textsf{-SLT} the class of $k$-slt language acceptors whose size is measured by
the sum of representing set sizes $|S|+|I|+|E|$.\footnote{Since we can convert this representation
to a minimal DFA in polynomial time, the \NP-hardness results in this section
still hold even if we assume the $k$\textsf{-SLT} is given by a DFA.}

Unfortunately, the perfecet pangram problem still remains to be \NP-hard.

\begin{proposition}
Perfect pangram problem is \NP-hard for \textsf{2-SLT}.
\end{proposition}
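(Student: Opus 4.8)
The plan is to reduce directly from the (directed) Hamiltonian Path Problem, reusing the idea behind the proof of Theorem~\ref{thm:dfa-e-nph} but exploiting the observation that ``following an edge'' is inherently a \emph{bigram} constraint, which is exactly what a $2$-slt representation can express.

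Given a directed graph $G = (V, A)$ (I write $A$ for the arc set to avoid clashing with the suffix set $E$ in the $2$-slt representation), I would set $\Sigma = V$ and build the $2$-slt acceptor $(S, I, E)$ by
\[
  S = V, \qquad I = \{\, uv \mid (u,v) \in A \,\}, \qquad E = V.
\]
The size $|S| + |I| + |E| = 2|V| + |A|$ is polynomial, and the construction is clearly computable in polynomial time.

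Next I would verify that the language denoted by $(S,I,E)$ is precisely the set of directed walks in $G$ of length at least one: since $S = E = \Sigma$, the prefix and suffix conditions are vacuous, so a string $w = v_1 \cdots v_m$ is accepted if and only if every consecutive pair $v_i v_{i+1}$ lies in $I$, i.e.\ if and only if $(v_i, v_{i+1}) \in A$ for all $i$. Because $|\Sigma| = |V|$, any perfect pangram over $\Sigma$ has length exactly $|V|$ and uses each vertex once; such a string is accepted exactly when it is a walk along arcs visiting every vertex exactly once, i.e.\ a Hamiltonian path. Conversely, a Hamiltonian path $v_1 \rightarrow \cdots \rightarrow v_n$ yields the perfect pangram $v_1 \cdots v_n$ in the language. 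Hence the language contains a perfect pangram if and only if $G$ has a Hamiltonian path, establishing \NP-hardness.

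The reduction is so direct that there is no real algorithmic obstacle; the only points deserving care are bookkeeping. First, the notational collision between the graph's arc set and the $2$-slt suffix set must be resolved (handled above by writing $A$ for arcs). Second, one must confirm the degenerate boundary cases cause no trouble: the length-$1$ membership case, governed by $S$ and $E$ alone, is irrelevant whenever $|V| \geq 2$ so that every perfect pangram has length at least $2$, and the trivial instances with $|V| \le 1$ can be decided directly. With those checked, correctness follows immediately.
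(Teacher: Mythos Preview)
Your proof is correct and matches the paper's approach exactly: the paper observes that the language built in Theorem~\ref{thm:dfa-e-nph} is already $2$-slt with $S=E=\Sigma$ and $I$ the edge set, which is precisely the reduction you spell out in detail.
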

\begin{proof}
The same proof as Theorem~\ref{thm:dfa-e-nph} applies, because the represented
language is actually 2-slt, with $S=E=\Sigma$ and edges denote the permitted neighbors $I$.
\end{proof}

Contrary to the perfect pangram problem, the normal pangram problem becomes
tractable when it is limited to \textsf{2-SLT}.
The product construction proof of Theorem~\ref{thm:dfa-p-nph}
does not apply to \textsf{2-SLT} because the resulting new language
is not in 2-slt anymore.

\begin{theorem}
Pangram problem for \textsf{2-SLT} can be solved in linear time.
\end{theorem}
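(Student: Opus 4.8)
The plan is to translate the question into one about covering walks in a directed graph and then solve it with strongly connected components. Build the directed graph $G$ whose vertices are the letters of $\Sigma$ and whose edges are exactly the pairs in $I$, with an edge $a \to b$ for each $ab \in I$. A string $w = w_1 \cdots w_n$ is accepted by $(S,I,E)$ precisely when $w$ spells out a walk in $G$ (each adjacent pair is an edge) whose first vertex lies in $S$ and whose last vertex lies in $E$. Under this dictionary a pangram is exactly a walk visiting every vertex of $G$ at least once, so the whole task reduces to deciding whether $G$ admits a vertex-covering walk that starts in $S$ and ends in $E$. The key point, and the reason this is tractable whereas the perfect-pangram version of Theorem~\ref{thm:dfa-e-nph} was not, is that walks may repeat vertices; this repetition collapses inside each strongly connected component.

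The characterization I would establish is the following. Compute the strongly connected components of $G$ and its condensation, the acyclic graph on the components. Inside a single component every vertex reaches every other, so a walk that enters a component can tour all of its vertices and exit from whichever vertex it likes; and since the condensation is acyclic, once a walk leaves a component it can never return. Consequently the distinct components met by any covering walk form a single non-repeating sequence $C_1, \ldots, C_m$ in which consecutive components are joined by a condensation edge and which exhausts all components, i.e.\ a Hamiltonian path of the condensation; conversely any such path is realized by a covering walk that fully tours each component before advancing. I would then invoke the standard fact that a directed acyclic graph has a Hamiltonian path if and only if its topological order is unique (equivalently, consecutive vertices of that order are joined by edges), in which case the endpoints are forced to be the unique source $C_1$ and the unique sink $C_m$. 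Folding in the start/end constraints, a pangram exists if and only if the condensation has a (necessarily unique) Hamiltonian path $C_1 \to \cdots \to C_m$ with $S \cap C_1 \neq \emptyset$ and $E \cap C_m \neq \emptyset$.

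The algorithm then reads off directly: construct $G$, find its components with a linear-time routine, form the condensation, run Kahn's algorithm checking that at every stage exactly one component has in-degree zero (this simultaneously verifies the unique topological order and records the source $C_1$ and sink $C_m$), and finally test the two nonemptiness conditions. Each stage costs $O(|\Sigma| + |I|)$ and the closing membership tests cost $O(|S| + |E| + |\Sigma|)$, giving total time linear in $|S| + |I| + |E| + |\Sigma|$. The main obstacle is the characterization of the middle paragraph: correctly arguing that a covering walk corresponds to a Hamiltonian path of the condensation rather than of $G$ itself, and that its endpoints are pinned to the source and sink. I expect the delicate bookkeeping to live in the degenerate cases---a single component, or singleton components lacking a self-loop---where the forced endpoints and the minimal-length walk must be checked directly; once the characterization is secured, linearity follows from off-the-shelf component and topological-sort machinery.
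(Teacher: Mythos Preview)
Your approach is essentially the same as the paper's: both translate the question to a vertex-covering walk in the digraph $(\Sigma,I)$, contract strongly connected components, and test for a Hamiltonian path in the resulting DAG via topological sorting. You supply more algorithmic detail (Kahn's algorithm with the unique-source check, explicit source/sink tests for $S$ and $E$) and flag the degenerate cases more carefully than the paper does, but the core argument is identical.
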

\begin{proof}
Let $(S,I,E)$ be a tuple representing a local language,
and consider a graph with $\Sigma$ is the set of nodes and
$I$ the set of edges. Whether or not a pangram is contained in the
language is equivalent to ask whether there is a (not necessarily simple)
path beginning from $S$ and ending in $E$ that visits all nodes $\Sigma$.

This graph-theoretic problem is solvable through the decomposition to strongly-connected components.
A path visiting all nodes in the original graph exists if and only
if the acyclic graph obtained by contracting strongly-connected components
has a Hamiltonian path from a component containing a $S$ node to a component with $E$.
For an acyclic graph, the Hamiltonian path (if any) is uniquely obtained by
topologically ordering the nodes. All these condition and checkable in linear time.
\end{proof}

If we go one step higher to the class of \textsf{3-SLT},
the pangram problem becomes hard again.

\begin{theorem}
Pangram problem is \NP-hard for \textsf{3-SLT}.
\end{theorem}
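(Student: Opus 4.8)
The plan is to prove \NP-hardness by reduction from the directed Hamiltonian Path problem, reusing the source problem of Theorem~\ref{thm:dfa-e-nph} but now exploiting the extra window length. The guiding intuition is the pair/window picture behind \textsf{3-SLT}: a string $w = w_1\cdots w_\ell$ is legal exactly when all of its overlapping length-$3$ windows $w_i w_{i+1} w_{i+2}$ lie in $I$, so a window may test a relation between $w_i$ and $w_{i+2}$ \emph{across} the middle symbol $w_{i+1}$. The reason the pangram problem collapses for \textsf{2-SLT} is that a covering walk may freely revisit symbols; to defeat this I would insert $n+1$ distinct \emph{checkpoint} symbols which, together with the pangram requirement that all of them be used, force an accepted string into a single increasing chain with exactly $n$ vertex slots.

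Concretely, for a directed graph $(V,A)$ with $V=\{v_1,\dots,v_n\}$ and arc set $A$ I would set $\Sigma = V \cup \{T_1,\dots,T_{n+1}\}$ and build the \textsf{3-SLT} acceptor $(S,I,E)$ by
\[
  S=\{\,T_1\, a : a\in V\,\}, \qquad E = \{\,a\, T_{n+1} : a \in V\,\},
\]
\[
  I = \{\,T_t\, a\, T_{t+1} : a\in V,\ 1\le t\le n\,\}\cup\{\,a\, T_{t+1}\, b : (a,b)\in A,\ 1\le t\le n-1\,\}.
\]
The first family of windows says that between consecutive checkpoints $T_t,T_{t+1}$ there sits exactly one vertex symbol (a ``slot''), and that checkpoint indices may only increase by one; the second family reads the vertices immediately before and after a middle checkpoint and admits the step only when it is a genuine arc of the graph. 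The whole instance has $|\Sigma| = 2n+1$ and $|S|+|I|+|E|$ polynomial in $n+|A|$, so the reduction runs in polynomial time, and by the footnote remark the hardness then transfers even to the DFA presentation of \textsf{3-SLT}.

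I would then prove that the language of $(S,I,E)$ contains a pangram if and only if $(V,A)$ has a Hamiltonian path. For the easy direction, a Hamiltonian path $v_{\pi(1)}\to\cdots\to v_{\pi(n)}$ yields the string $T_1 v_{\pi(1)} T_2 v_{\pi(2)} \cdots T_n v_{\pi(n)} T_{n+1}$, which obeys every window and, being a permutation interleaved with all checkpoints, contains every symbol. The substantive direction, and the step I expect to be the main obstacle, is showing that \emph{every} accepted pangram already has this rigid shape. The argument is that the only windows beginning with a checkpoint, or with a vertex, force the string to alternate checkpoint, vertex, checkpoint, $\dots$ with consecutive checkpoints differing by exactly one in index; starting from $T_1$ (by $S$) the checkpoints therefore appear as an increasing run, and the pangram obligation to contain $T_{n+1}$ forces the full run $T_1,\dots,T_{n+1}$, hence exactly $n$ one-vertex slots. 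Finally, covering all $n$ vertices within these $n$ slots leaves no room for repetition, so the vertices form a permutation whose consecutive pairs are arcs by the second window family, i.e.\ a Hamiltonian path. The crux is precisely that the distinct, all-required checkpoints eliminate the revisiting walks that rendered \textsf{2-SLT} tractable, turning ``cover every symbol'' back into ``visit every vertex exactly once.''
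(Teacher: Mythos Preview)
Your proposal is correct and follows essentially the same approach as the paper: a reduction from directed Hamiltonian Path that interleaves the vertices with a strictly increasing sequence of fresh ``checkpoint'' symbols, so that the length-$3$ windows simultaneously force the checkpoint chain and test adjacency across each checkpoint. The only difference is cosmetic---the paper uses $n$ numeric markers with the string shaped as $v_1\,1\,v_2\,2\cdots v_n\,n$, whereas you use $n{+}1$ markers with the checkpoints at both ends---and your write-up of the rigidity argument for the converse direction is in fact more explicit than the paper's.
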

\begin{proof}
Again the proof is by reduction from Hamiltonian Path Problem.
Let us assume the given instance of Hamiltonian Path Problem is $(V, E)$.
We let the alphabet $\Sigma = V \cup \{1, \ldots, |V|\}$,
and the corresponding instance $(S_l,I_l,E_l)$ of \text{3-SLT} is
constructed as follows:
\begin{align*}
  S_l &= \{(v,1) \mid v \in V\} \\
  I_l &= \{(v, k, u) \mid (v,u) \in E, k \in \{1, \ldots, |V|\}\} \\
      &\phantom{=} \cup \{(k, v, k+1) \mid v \in V, k \in \{1, \ldots, |V|-1\}\} \\
  E_l &= \{(v, |V|) \mid v \in V\}
\end{align*}
Then, if there is a Hamiltonian path $v_1 \rightarrow \cdots \rightarrow v_n$
there exists a pangram $v_1 1 v_2 2 \cdots (n-1) v_n n$ in the language accepted by the automaton,
and vice versa.
\end{proof}


\section{On Strictly Piecewise Testable Languages}

Beside the hierarchy of locally testable languages, yet another well-studied
classes of sub-regular languages is those of {\em piecewise testable languages}~\cite{Simon75}.
Recall that by $u \sqsubseteq w$ we mean $u$ to be a (not necessary contiguous)
subsequence of $w$. A language $L$ is $k${\em-strictly piecewise}~\cite{Heinz07,RHBEVWW10}
if $L$ can be defined in terms of forbidden subsequences, i.e., if $L$ can be written as
$L = \{w \mid \forall u \in F. u \not\sqsubseteq w\}$ for some $F \subseteq \Sigma^{\leq k}$.
Let $k$\textsf{-SPT} be the class of $k$-strictly piecewise language acceptors
whose size is measured by the size $|F|$ of the forbidden subsequence set.

First of all, since all strictly piecewise languages are downward closed,
it contains a pangram if and only if it contains a perfect pangram.
Hence the two problems become the same.
For a lower class of this hierarchy, the problems become tractable.

\begin{proposition}
Perfect pangram problem and pangram problem can be solved in linear time
for \textsf{2-SPT}.
\end{proposition}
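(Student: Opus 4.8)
The plan is to characterize exactly what a 2-SPT language looks like and then read off the answer directly from the forbidden set $F$. Since $F \subseteq \Sigma^{\leq 2}$, each forbidden subsequence is either a single letter $a \in \Sigma$ or a pair $ab$ with $a,b \in \Sigma$. The key observation is that because strictly piecewise languages are downward closed, the pangram problem and perfect pangram problem coincide (as noted in the excerpt), so I only need to decide whether any pangram — equivalently, a string containing every letter of $\Sigma$ at least once — avoids all forbidden subsequences in $F$.

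First I would dispose of the single-letter forbidden subsequences. If any letter $a \in \Sigma$ appears in $F$ as a length-one forbidden subsequence, then $a$ cannot occur in any string of the language, so no pangram can exist; the answer is immediately ``no''. This check is linear in $|F|$. Assuming no single letters are forbidden, every letter may appear, and the only constraints come from the length-two forbidden subsequences.

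Next I would analyze the length-two constraints. A forbidden pair $ab$ means no string in the language may contain an occurrence of $a$ followed (not necessarily immediately) by an occurrence of $b$. The plan is to show that a pangram exists if and only if the binary relation $R = \{(a,b) \mid ab \in F\}$ admits a total ordering of $\Sigma$ consistent with it — more precisely, we can build a pangram by writing the letters in some order, and a pair $ab$ is violated exactly when $a$ precedes $b$ in our chosen arrangement. The cleanest route is to observe that we may assume each letter appears exactly once in a candidate witness (downward closure again), so a witness is a permutation $\sigma_1 \cdots \sigma_n$ of $\Sigma$, and it avoids $F$ iff for no forbidden pair $ab$ does $a$ appear before $b$. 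This is precisely the condition that the permutation is a reverse topological order of the directed graph on $\Sigma$ with an edge $a \to b$ for each $ab \in F$; such an ordering exists iff the graph is acyclic, which is testable in linear time by a standard DFS. I would note the small subtlety that a self-loop $aa \in F$ forbids $a$ entirely (the letter $a$ followed by $a$ requires two copies, which cannot appear in a perfect pangram, but more fundamentally $aa \sqsubseteq w$ for any pangram containing $a$ since it must contain $a$ at least once — here care is needed, as a single $a$ does not contain $aa$ as a subsequence, so $aa \in F$ only forbids strings with two or more $a$'s, which is automatically satisfied by a perfect pangram).

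The main obstacle, and the step I would be most careful about, is the treatment of repeated letters and self-loops: whether a forbidden pair $aa$ blocks a pangram depends on whether we insist a perfect-pangram witness (one occurrence each) suffices, and I must justify that restricting attention to perfect pangrams loses no generality. The justification is exactly the downward-closure equivalence: if any pangram avoids $F$, its subsequence keeping one copy of each letter is still a pangram and still avoids $F$, and it is a perfect pangram. With that reduction in hand, self-loops $aa$ impose no constraint on a perfect-pangram witness, and the whole problem collapses to an acyclicity test on a graph of size $O(|F|)$, giving the claimed linear-time bound.
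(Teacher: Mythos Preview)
Your proposal is correct and follows essentially the same approach as the paper: handle the degenerate short forbidden strings first, then build a directed graph on $\Sigma$ with an edge for each length-two forbidden pair and test acyclicity (equivalently, existence of a reverse topological order) in linear time. You are in fact more careful than the paper about the self-loop case $aa \in F$, correctly arguing it imposes no constraint on a perfect-pangram witness; conversely, you omit the trivial case $\epsilon \in F$, which the paper does mention.
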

\begin{proof}

If $F$ contains an empty string or a string of length-1, the language
trivially excludes pangrams. Hence we assume $F \subseteq \Sigma^2$ below.

Note that each member $(x,y)$ in $F$ represents the constraint
that the letter $x$ must come {\em after} $y$ in a pangram (otherwise $x$
comes before $y$ in the pangram, which contradicts $F$.)

Now, consider a graph with nodes $\Sigma$ and edges $F$. If this graph
has a cycle, no pangram is contained in the language because all pangram
must violate the ``must come after'' constraint for some edge in the cycle.
If this graph has no cycle, then the reverse topological sort will give
a pangram, in which all $x$ comes after $y$ for all $(x,y) \in F$ by the
definition of topological sorting. Cycle detection of a directed graph
can be done in linear time.
\end{proof}

Not surprisingly, once we change the parameter from 2 to 3,
the \NP-hardness creeps in.

\begin{theorem}
Perfect pangram problem and pangram problem are \NP-complete for \textsf{3-SPT}.
\end{theorem}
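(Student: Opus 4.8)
The plan is to prove \NP-completeness for \textsf{3-SPT} via reduction from Hamiltonian Path, mirroring the structure of the \textsf{3-SLT} hardness proof but adapted to the ``forbidden subsequence'' semantics rather than the ``forbidden substring'' semantics. Since strictly piecewise languages are downward closed, the perfect pangram problem and pangram problem coincide here (as noted in the excerpt), so I only need to handle one of them. Containment in \NP{} follows immediately: a \textsf{3-SPT} acceptor can be converted to a polynomial-size DFA (or directly to a \CFG), so Theorem~\ref{thm:cfg-p-np} and Theorem~\ref{thm:cfg-e-np} give membership in \NP. The real work is the hardness direction.

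First I would set up the reduction. Given a Hamiltonian Path instance $(V,E)$, the natural idea is to force a candidate pangram to interleave the vertices with position markers $1,\ldots,|V|$, exactly as in the \textsf{3-SLT} proof, so that a valid pangram must spell out $v_1 1 v_2 2 \cdots v_n n$ for some ordering of vertices, and the edge constraints force consecutive vertices to be adjacent in $E$. The alphabet is $\Sigma = V \cup \{1,\ldots,|V|\}$. The key difference from \textsf{3-SLT} is that here I cannot directly say ``the substring $v k u$ is allowed iff $(v,u)\in E$''; I can only \emph{forbid} length-$\leq 3$ subsequences. So I must encode ``each position marker appears in the right place relative to the vertices'' and ``only edges of $G$ may be used'' purely through forbidden subsequences.

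The main obstacle is precisely this translation into forbidden-subsequence form. In a perfect pangram every symbol appears exactly once, which I can exploit: I would forbid the length-2 subsequence $(k+1, k)$ for each $k$, forcing the markers to appear in increasing order $1,2,\ldots,|V|$; and I would forbid subsequences that place a marker badly relative to vertices, e.g. forbidding $(k+1, v, k)$-type patterns or the like, to pin each vertex into its slot. The edge constraints are the delicate part: in a subsequence-based formalism I want to say ``if $u$ immediately follows $v$ in the vertex ordering then $(v,u)\in E$,'' but subsequences cannot see adjacency directly. The fix is to route the constraint through the markers: a vertex $v$ sitting in slot $j$ (between markers $j{-}1$ and $j$) and a vertex $u$ in slot $j{+}1$ are separated by exactly the marker $j$, so I can forbid the length-3 subsequence $(v, j, u)$ for every non-edge $(v,u)\notin E$ and every $j$, thereby banning any consecutive non-edge transition while leaving edge transitions untouched. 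I would need to check carefully that these forbidden triples do not accidentally rule out legitimate Hamiltonian pangrams (the perfect-pangram single-occurrence property is what makes this clean) and that no illegal string sneaks through.

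Finally I would verify both directions of the equivalence. A Hamiltonian path $v_1 \to \cdots \to v_n$ yields the perfect pangram $v_1 1 v_2 2 \cdots v_n n$, which avoids all forbidden subsequences by construction. Conversely, any perfect pangram in the language contains each of $V$ and each marker exactly once; the marker-ordering constraints force $1 < 2 < \cdots < n$ in position, the vertex-slotting constraints force exactly one vertex between consecutive markers, and the non-edge constraints force each consecutive vertex pair to be an edge, so reading off the vertices recovers a Hamiltonian path. Combining hardness with the \NP{} containment noted above gives \NP-completeness, which is the claimed result.
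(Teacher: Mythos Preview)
Your reduction has a genuine gap in the edge-constraint step. You propose forbidding the length-$3$ subsequence $(v, j, u)$ for every non-edge $(v,u) \notin E$ and every marker $j$, intending this to block only \emph{consecutive} non-edge transitions in the vertex ordering. But subsequences are non-local: in the target pangram $v_1\,1\,v_2\,2\,\cdots\,v_n\,n$, the pattern $(v_i, j, v_{i'})$ occurs as a subsequence whenever $i \le j < i'$, not merely when $i' = i+1$. Hence any two vertices that are non-adjacent in the graph, regardless of whether they are consecutive in the path, trigger a forbidden subsequence.

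Concretely, take $V = \{a,b,c\}$ with $E = \{(a,b),(b,c)\}$. The Hamiltonian path $a \to b \to c$ yields the pangram $a\,1\,b\,2\,c\,3$. Since $(a,c) \notin E$, you forbid $(a, 1, c)$ and $(a, 2, c)$; both are subsequences of $a\,1\,b\,2\,c\,3$, so this legitimate pangram is rejected. Your remark that ``the perfect-pangram single-occurrence property is what makes this clean'' does not help here: single occurrence controls multiplicity, not locality. The same difficulty undermines any attempt to encode graph adjacency with bounded forbidden subsequences, because the subsequence relation only sees relative order, never ``immediately next to.'' (Your ``vertex-slotting constraints'' are also left unspecified, but even granting them, the edge encoding already fails.)

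The paper avoids this obstacle by reducing from the Betweenness problem instead. A betweenness constraint $(a,b,c)$ --- requiring $b$ to lie between $a$ and $c$ in a total order --- is already a pure relative-order statement and translates directly into the four forbidden length-$3$ subsequences $acb$, $cab$, $bac$, $bca$. No locality is needed, which is exactly what the strictly piecewise semantics can express.
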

\begin{proof}
It is in \NP{} because the witness of perfect pangram is in linear size.

The reduction for \NP-hardness is from the Betweenness Problem~\cite{Opatrny79},
that asks there is a total ordering of elements of a finite set $\Sigma$, under the set of
given constraints $(a,b,c) \in \Sigma^3$ that imposes either $a < b < c$ or $c < b < a$
(i.e., $b$ must be in between the other two.)

We can construct the forbidden set of \textsf{3-SPT} over the same $\Sigma$.
For each betweenness constraint $(a,b,c)$, the forbidden subsequences ``$acb$'', ``$cab$'',
``$bac$'', and ``$bca$'' are added. Then, if a total ordering exists, the element of
$\Sigma$ listed along the order gives a perfect pangram over $\Sigma$ avoiding all forbidden
subsequences, and vice versa.
\end{proof}

Interesting open problem is the complexity when the \textsf{3-SPT} language is specified
in the form of a DFA. The minimum DFA representation can be exponentially larger than
the forbidden set representation, hence our hardness result does not directly apply.


\section{Tractability of Always-Pangram Problems}

Another interesting question on a language acceptor is to decide
whether all the strings it accepts are (perfect) pangrams.
Despite the \NP-hardness of showing existence of one pangram,
checking if {\em all} of them are pangram is efficiently decidable.

\begin{theorem}
For a \CFG{} $g$ over alphabet $\Sigma$, whether or not $\lang(g) \subseteq P_\Sigma$
is decidable in $O(|\Sigma|\cdot|g|)$ time.
\end{theorem}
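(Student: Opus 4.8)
The plan is to decide the complementary question: $\lang(g) \not\subseteq P_\Sigma$ holds precisely when some $w \in \lang(g)$ fails to be a pangram, i.e.\ when $w$ omits at least one letter. Since omitting \emph{some} letter means $w \in (\Sigma \setminus \{\sigma\})^*$ for \emph{some} $\sigma \in \Sigma$, I would first record the equivalence
\[
  \lang(g) \not\subseteq P_\Sigma
  \quad\Longleftrightarrow\quad
  \exists\, \sigma \in \Sigma:\ \lang(g) \cap (\Sigma \setminus \{\sigma\})^* \neq \emptyset .
\]
Thus the whole task reduces to performing, for each of the $|\Sigma|$ symbols $\sigma$, a single nonemptiness test for the set of words of $\lang(g)$ that avoid $\sigma$. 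If all $|\Sigma|$ tests report emptiness we answer ``yes'' (every member is a pangram); otherwise we answer ``no'', and the offending $\sigma$ serves as a certificate. The empty-language case is handled automatically, since if $\lang(g)=\emptyset$ then no $\sigma$-avoiding word exists and we correctly report the vacuously true inclusion.

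For a fixed $\sigma$, the set $\lang(g) \cap (\Sigma \setminus \{\sigma\})^*$ is exactly the language generated by the grammar obtained from $g$ by discarding every production whose right-hand side contains the terminal $\sigma$: such a production can never participate in a derivation of a $\sigma$-free terminal string, while any surviving production is harmless to keep. Testing nonemptiness of a context-free grammar is then the classical \emph{generating} (productive) symbol computation: a nonterminal $N$ generates some terminal string iff it has a surviving production $N \to \alpha$ all of whose nonterminals themselves generate. I would compute the least such set by the standard bottom-up fixpoint, seeding it with the nonterminals possessing a surviving all-terminal (or $\epsilon$) right-hand side, and the answer for $\sigma$ is ``nonempty'' iff the start symbol ends up in the generating set.

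To meet the time bound I would implement each of these fixpoints with the textbook worklist algorithm that runs in time linear in the grammar: attach to every production a counter holding the number of nonterminal occurrences on its right-hand side not yet known to generate, mark a production permanently ``dead'' if its right-hand side uses $\sigma$, and whenever a nonterminal is newly declared generating, decrement the counters of the live productions in which it occurs, declaring the left-hand side generating as soon as one counter reaches zero. Each production and each right-hand-side symbol occurrence is touched a constant number of times, giving $O(|g|)$ per symbol and $O(|\Sigma|\cdot|g|)$ in total; since the only role of $\sigma$ is the ``dead'' marking, the same adjacency structure can be reused across the $|\Sigma|$ passes.

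The only real subtlety—more bookkeeping than genuine obstacle—lies in handling multiplicities and $\epsilon$-rules correctly: a nonterminal occurring several times in one right-hand side must be accounted for so that the per-production counter reaches zero exactly when all occurrences are resolved, and an $\epsilon$-production (empty right-hand side, which trivially avoids $\sigma$) must seed the fixpoint as generating. Pinning down the counter initialization so that the amortized linear bound holds is where the argument needs care; by contrast the equivalence of the first paragraph and the reduction to a generating-symbol computation are immediate once stated.
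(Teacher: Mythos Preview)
Your proposal is correct and follows essentially the same approach as the paper: for each $\sigma\in\Sigma$, drop the productions containing $\sigma$ on the right-hand side and test the resulting grammar for nonemptiness in linear time, answering ``yes'' iff all $|\Sigma|$ tests return empty. You supply more implementation detail (the worklist/counter mechanics and the handling of $\epsilon$-rules and multiplicities) than the paper's proof, which simply cites the linear-time emptiness test as known, but the argument is the same.
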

\begin{proof}
For each $\sigma \in \Sigma$, compute the intersection of $g$ and
$(\Sigma \setminus \{\sigma\})^*$.
This is obtained by just dropping all the rules containing $\sigma$ in right-hand side.
If the obtained grammar recognizes a non-empty language (which is linear time decidable),
there is a non-pangram member in $\lang(g)$.
If all of the intersections are empty, $\lang(g) \subseteq P_\Sigma$ holds.
\end{proof}

\begin{corollary}
For a \CFG{} $g$ over alphabet $\Sigma$, whether or not $\lang(g) \subseteq E_\Sigma$
is decidable in $O(|\Sigma|\cdot|g|)$ time.
\end{corollary}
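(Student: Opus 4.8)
The plan is to reduce the $E_\Sigma$ case to the $P_\Sigma$ case that was just established, so that the preceding theorem's $O(|\Sigma|\cdot|g|)$ algorithm can be invoked almost verbatim. The key observation is that a string $w$ is a perfect pangram precisely when it is a pangram \emph{and} has length exactly $|\Sigma|$. Equivalently, $w \in E_\Sigma$ iff $w \in P_\Sigma$ and no letter $\sigma$ occurs two or more times in $w$. So $\lang(g) \subseteq E_\Sigma$ fails in exactly two ways: either some member of $\lang(g)$ misses a letter (i.e.\ $\lang(g) \not\subseteq P_\Sigma$), or some member contains a repeated letter.

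First I would handle the ``missing letter'' condition by simply invoking the preceding theorem: checking $\lang(g) \subseteq P_\Sigma$ already costs $O(|\Sigma|\cdot|g|)$. Second, I would test, for each $\sigma \in \Sigma$, whether $\lang(g)$ contains a string in which $\sigma$ appears at least twice. The clean way to do this is to intersect $g$ with the regular language $\Sigma^*\sigma\Sigma^*\sigma\Sigma^*$ of strings containing at least two occurrences of $\sigma$; since this is a fixed three-state pattern (counting occurrences of $\sigma$ up to two), the product construction yields a grammar of size $O(|g|)$, and emptiness of the result is linear-time decidable. Running this once per letter costs $O(|\Sigma|\cdot|g|)$ as well. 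If every intersection is empty and the $P_\Sigma$ containment holds, then $\lang(g) \subseteq E_\Sigma$; otherwise it fails.

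The only mild subtlety, and the step I would be most careful about, is keeping each per-letter intersection genuinely linear so the total does not blow up to something worse than $O(|\Sigma|\cdot|g|)$. The ``repeated letter'' automaton has a constant number of states independent of $\Sigma$, so the product construction inflates $|g|$ only by a constant factor for each fixed $\sigma$; the dependence on $\Sigma$ therefore enters only through the outer loop over the $|\Sigma|$ letters. This matches the bound in the preceding theorem, and combining the two linear passes gives the claimed $O(|\Sigma|\cdot|g|)$ bound. I would remark that, alternatively, one can observe that $E_\Sigma = P_\Sigma \cap S_\Sigma$ where $S_\Sigma$ fixes the length to $|\Sigma|$, but the repeated-letter formulation above avoids reasoning about string length and mirrors the structure of the proof of the preceding theorem most directly.
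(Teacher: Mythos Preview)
Your proof is correct and shares the paper's overall decomposition---reduce to the $P_\Sigma$ containment from the preceding theorem, then add a second check ruling out the remaining way a pangram can fail to be perfect---but the two diverge on how that second check is carried out. The paper uses the length characterisation you mention at the end: it verifies that every string in $\lang(g)$ has length exactly $|\Sigma|$ by assigning a length to each nonterminal bottom-up and checking for contradictions, which costs only $O(|g|)$. You instead test, for each letter $\sigma$, whether $\lang(g)$ meets $\Sigma^*\sigma\Sigma^*\sigma\Sigma^*$, using a constant-state product construction per letter for a total of $O(|\Sigma|\cdot|g|)$. Your route is a bit heavier on the second step but stays within the stated bound and has the virtue of structurally mirroring the per-letter emptiness tests of the preceding theorem; the paper's route is lighter (a single $O(|g|)$ pass) but relies on a separate length-propagation argument. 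Either way the corollary follows.
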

\begin{proof}
$\lang(g) \subseteq E_\Sigma$ is equivalent to saying that 
$\lang(g) \subseteq P_\Sigma$ and all members of $\lang(g)$ is of length $|\Sigma|$.
The latter condition is easily checkable in $O(|g|)$ time,
by assigning the length for each nonterminal
in a bottom-up manner (from the ones with right-hand side consisting of terminals) and
checking no contradiction.
\end{proof}

The problems analyzed in this section have natural interpretation when the language is considered
to be describing some sequence of events possibly happening.
The condition $\lang(g) \subseteq P_\Sigma$
corresponds to say that all events will eventually happen for all possible
event sequences.

\section{Related Work}
The All Colors Shortest Path problem (ACSP)~\cite{BCGSAE15}
asks the shortest path in an undirected graph with nodes colored,
under the constraint that the paths must visit all colors.
The pangram problem for \NFA{} can be regarded as an edge-colored and directed
version of All Colors Path problem. Our result shows that the problem is \NP-complete
even if the shortestness condition is dropped and just the existence of such paths is asked.

\section{Conclusion}

\begin{table}[t]
\small
\begin{tabular}{r|c|c|c|c|c|c}
      & $\lang \cap P_\Sigma \neq \emptyset$
      & $\lang \cap E_\Sigma \neq \emptyset$
      & $\lang \supseteq P_\Sigma$
      & $\lang \supseteq E_\Sigma$
      & $\lang \subseteq P_\Sigma$
      & $\lang \subseteq E_\Sigma$
\\
\hline
  \textsf{2-SLT}    & \textsf{P}   & \textsf{NPc} & - & - & \textsf{P} & \textsf{P}
\\
  \textsf{3-SLT}    & \textsf{NPc} & \textsf{NPc} & - & - & \textsf{P} & \textsf{P}
\\
  \textsf{2-SPT}    & \textsf{P}   & \textsf{P}   & - & - & - & -
\\
  \textsf{3-SPT}    & \textsf{NPc} & \textsf{NPc} & - & - & - & -
\\
  \textsf{CofinDFA} & -            & \textsf{NPc} & \textsf{coNPc}   & \textsf{coNPc} & - & -
\\
  \textsf{FinDFA}   & \textsf{NPc} & \textsf{NPc} & -                & \textsf{coNPc} & \textsf{P} & \textsf{P}
\\
  \DFA              & \textsf{NPc} & \textsf{NPc} & \textsf{coNPc}   & \textsf{coNPc} & \textsf{P} & \textsf{P}
\\
  \NFA              & \textsf{NPc} & \textsf{NPc} & \textsf{PSPACEc} & \textsf{coNPc} & \textsf{P} & \textsf{P}
\\
  \CFG              & \textsf{NPc} & \textsf{NPc} & undecidable      & \textsf{coNPc} & \textsf{P} & \textsf{P}
\end{tabular}
\caption{Summary of the results (- means the problem is almost trivial.)}
\label{tbl:summary}
\end{table}

We have defined the notion of pangram and perfect pangram in terms of formal language theory.
Computational complexity of several problems around pangrams are investigated.
The result presented in this paper is summarized in Table~\ref{tbl:summary}.


\bibliographystyle{alpha}
\bibliography{../kinaba,sub}
\end{document}